\documentclass{article}
\usepackage{macros}
\usepackage{authblk}

\begin{document}

\title{Polyregular equivalence is undecidable in higher-order types}
\author[1]{Miko{\l}aj Boja\'nczyk}
\author[2]{Grzegorz Fabia\'nski}
\author[2]{Rafa{\l} Stefa\'nski}
\affil[1]{University of Warsaw, Poland}
\affil[2]{IDEAS Research Institute, Poland}

\maketitle 
\begin{abstract}
  It is open whether equivalence $f = g$ is decidable for string-to-string polyregular functions. We consider their higher-order extension based on the $\lambda$-calculus definition of polyregular functions from Bojańczyk (2018). In this setting, equivalence is undecidable by reduction from the tiling problem.
\end{abstract}

\section{Introduction}
\label{sec:introduction}
This paper is about the equivalence problem in transducer theory. In this problem, we are given two functions specified by some kind of finite state model, and we want to know if they are extensionally equal, i.e.~for every input, the two outputs are equal. A simple example of this kind is the equivalence problem for Mealy machines, which are deterministic automata with transitions labelled by letters of an output alphabet. A Mealy machine defines a length-preserving string-to-string function. In this case, the equivalence problem is the same question as equivalence of deterministic finite automata that recognise languages, and so it is decidable in polynomial time. A more interesting result is decidability of equivalence for string-to-string functions defined by deterministic two-way transducers~\cite[Theorem 1]{gurariEquivalenceProblemDeterministic1982}, or for (possibly copyful) streaming string transducers~\cite[Theorem 3]{filiotCopyfulStreamingString2017}. These results concern deterministic models; for nondeterministic transducers equivalence is undecidable already in the most basic model of rational relations, i.e.~nondeterministic automata with outputs on transitions, which define binary relations on strings~\cite[Corollary 4.1]{sakarovitch2009elements}. (This undecidability result is already implicit in~\cite[Theorem 18]{RabinScott59}.)

One major open problems about decidability of equivalence concerns the class of polyregular functions~\cite{polyregular-survey}. 
 The most concise definition of this class, see~\cite[Section 3]{polyregular-survey},  is that it consists of finite compositions of functions which are either: (a) defined by a deterministic two-way transducer, or (b) the \emph{marked squaring function}, which is explained in the following example 
    \begin{align*}
        \label{eq:marked-squaring}
    1234 \quad \mapsto \quad \underline 1234 1\underline 234 12 \underline 34 123 \underline 4 .
    \end{align*}
There are several other computational models which are equivalent to the one described above, see the survey~\cite{polyregular-survey} for at  five examples. One of these models, which is a functional programming language, is the topic of this paper and will be described in more detail shortly. The outstanding open problem about polyregular functions is decidability of equivalence, as stated below.

 \begin{problem}\label{prob:polyregular-equivalence}
    Is the following problem decidable?
  \begin{itemize}
    \item \textbf{Input.} Two polyregular functions  $f,g : A^* \to B^*$, with $A$ and $B$ finite.
    \item \textbf{Question.} Are they equivalent, i.e.~the outputs equal for all input strings?
 \end{itemize}
 \end{problem}

 In this paper, we show that the problem becomes undecidable, once it is generalized to allow for higher-order functions. 
The higher-order generalization  is based on  the  \emph{polyregular $\lambda$-calculus}, which is a functional programming language  based on the simply typed $\lambda$-calculus~\cite[Section 4.1]{polyregular-survey}. This language was designed to define the polyregular functions. This means that for  programs which have a  \emph{string-to-string type}, i.e.~a type of the form 
\begin{align*}
A^* \to B^* \qquad \text{for $A$ and $B$ finite,}
\end{align*}
the programming language defines exactly the polyregular functions~\cite[Theorem 4.1]{polyregular-survey}. However, the  language also allows for higher-order types, e.g.~types of the form
\begin{align*}
(A^* \to B^*) \to C^*,
\end{align*}
or even more complicated types. For the intended application of the language, which is to define the polyregular string-to-string functions, these higher-order types are used for subroutines in programs which ultimately have a string-to-string type. However, one could also be interested in  the equivalence problem when such higher-order types are used for instances of the equivalence problem. The contribution of this paper is Theorem~\ref{thm:undecidability}, which shows that the equivalence problem becomes undecidable.

\section{The programming language}
\label{sec:programming-language}
\newcommand{\leftterm}{\mathtt{Left}}
\newcommand{\rightterm}{\mathtt{Right}}
\newcommand{\fstterm}{\mathtt{fst}}
\newcommand{\sndterm}{\mathtt{snd}}
\newcommand{\mapterm}{\mathtt{map}}
\newcommand{\casesterm}{\mathtt{either}}
\newcommand{\splitterm}{\mathtt{split}}
\newcommand{\multterm}{\mathtt{mult}}
\newcommand{\headtailterm}{\mathtt{headtail}}
\newcommand{\concatterm}{\mathtt{concat}}
\newcommand{\headterm}{\mathtt{head}}
\newcommand{\tailterm}{\mathtt{tail}}
\newcommand{\blockterm}{\mathtt{block}}

We begin by  briefly describing the programming language in question, which is the polyregular $\lambda$-calculus. 
This programming language can be seen as a fragment of Haskell, and we will use Haskell syntax for our code. (Somewhat inconsistently, we do not use Haskell notation for types, since write $A^*$ instead of $[A]$, and likewise for products, which we denote by $A \times B$.)  Let us first illustrate the programming language by an example.

\begin{myexample}[Unmarked squaring]\label{ex:squaring} Consider some type $A$, e.g.~a finite alphabet. 
Here is a program which implements the  function 
\begin{align*}
\mathtt{square} : A^* & \to A^* \\
w & \mapsto w^{|w|}.
\end{align*}
This is similar to the marked squaring function mentioned in the introduction, except that it does not use underlines. Here is the program:
\begin{lstlisting}[language=Haskell]
  \w -> concat (map (\c -> w) w)
\end{lstlisting}
This program inputs a list \texttt{w}, replaces every character \texttt{c} in this list by the whole list itself, and concatenates the resulting nested list to get a non-nested list. 
\end{myexample}

As is often the case, a fully formal and rigorous definition of the programming language would take too much space, and so we only give an informal description here, referring the reader to~\cite[Section 4]{polyregular-survey} for more details. The language uses types built from the following  type constructors:
\begin{align*}
    \myunderbrace{1}{unit}
    \qquad 
    \myunderbrace{A \times B}{product} \qquad 
    \myunderbrace{A + B}{disjoint\\ union} \qquad 
    \myunderbrace{A^*}{lists} \qquad 
    \myunderbrace{A \to B}{functions}.
\end{align*}
The unit type is the atomic type, and it is meant to represent a set with one element, which is denoted by (). We use Church style typing, which means that each variable has an associated type. 

To construct terms we can use $\lambda$-abstraction and function application (as was illustrated in Example~\ref{ex:squaring}), as well as basic constructors for data, namely 
\begin{align*}
\myunderbrace{()}{unit} \quad 
\myunderbrace{(M,N)}{pair} \quad 
\myunderbrace{\leftterm M \quad \rightterm N}{disjopint union} \quad \myunderbrace{[M_1,\ldots,M_n]}{list}.
\end{align*}
Furthermore, the language  has certain built-in functions, which are mainly deconstructors for the data, as well as functions for processing lists. These are given below using their standard Haskell names, with their semantics being the same as in Haskell:
 \begin{align*}
     \fstterm & :  A \times B \to A \\
     \sndterm & : A \times B \to B \\
        \mathtt{uncons} & : A^* \to 1 + A \times A^* \\
        \casesterm & : (A \to C) \to (B \to C) \to (A + B) \to C \\
    \concatterm & : A^{**} \to A^* \\
        \mapterm & : (A \to B) \to A^* \to B^* 
    \end{align*}
There are two other list processing functions in the polyregular $\lambda$-calculus, which are  called $\splitterm$ and group multiplication. Since these functions are not needed for the undecidability proof, we do not explain them here; the reader can find their definitions in~\cite[Section 4]{polyregular-survey}.

Importantly, the programming language has no features for iteration or recursion. In particular, the programs are guaranteed to terminate, and therefore we can use a simple denotational semantics, in which each type $A$ is interpreted naively as a semantic domain $\sem A$. The semantic domain $\sem 1$ is  a singleton set, the interpretations of $A \times B$, $A + B$ and $A^*$ are the standard ones, and $\sem {A \to B}$ is the set of all total functions from $\sem A$ to $\sem B$. To each program $M$ of type $A$, we assign its semantics $\sem M \in \sem A$ in the natural way.
\section{Undecidability}
\label{sec:undecidability}
In this section, we present our contribution, which is undecidability of equivalence for the polyregular $\lambda$-calculus, when instances of higher-order type are allowed.

\begin{theorem}\label{thm:undecidability}
    The following problem is undecidable. 
    \begin{itemize}
        \item \textbf{Input.} Two programs $M$ and $N$ of the polyregular $\lambda$-calculus of the same type, which is not necessarily a string-to-string type.
        \item \textbf{Question.} Are the programs equivalent, i.e.~$\sem M = \sem N$?
    \end{itemize}
\end{theorem}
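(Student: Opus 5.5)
We reduce from the tiling problem, which the abstract announces as the source of undecidability. An instance consists of a finite set of tiles $T$, horizontal and vertical compatibility relations $H,V \subseteq T \times T$, and a designated initial tile. The question is whether, for every $n$, there is a tiling of the $n \times n$ grid; equivalently, by König's lemma, whether the quarter-plane can be tiled. This is co-r.e.-hard, so a reduction to equivalence suffices. We will produce two programs of a higher-order type, roughly of the shape
\begin{align*}
(1^* \to 1^* \to T) \to 1^* \to 1 + 1 .
\end{align*}
The program $M$ is the constant function returning $\leftterm\,()$ ("no valid tiling found"). The program $N$ takes a purported tiling $t$, viewed as a function from two unary numbers $i,j$ to a tile, together with a size $n$. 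It checks that the $n\times n$ square described by $t$ is valid. If it is, $N$ returns $\rightterm\,()$, and otherwise $\leftterm\,()$. Then $\sem M = \sem N$ holds iff no $n\times n$ square can be tiled for any $n$, which is undecidable by the above.

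The key step is writing $N$ without recursion. Let $w = [(),\ldots,()]$ be the input of length $n$. The list of prefixes of $w$, or equivalently the list of positions, can be produced by a polyregular program; for instance, $\mapterm$ over a marked-squaring–style construction yields a list of unary numbers $0,\ldots,n-1$. Alternatively, we index positions by the suffixes obtained through $\mathtt{uncons}$ nested inside $\mapterm$. Using $\concatterm$ and $\mapterm$ we then form the list of all pairs $(i,j)$, together with their successors $(i+1,j)$ and $(i,j+1)$ whenever these exist. Applying $t$ to each of these numbers gives a list of triples of tiles. For each triple, a finite case analysis with $\casesterm$ (tiles are finite, coded as iterated sums of $1$) determines whether the local constraints $H$ and $V$ hold, producing a Boolean in $1+1$. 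We also check the initial tile at $(0,0)$. Finally we must compute the conjunction of a list of Booleans. One way is to filter: map each failing position to $[()]$ and each passing one to $[]$, concatenate, and test emptiness with $\mathtt{uncons}$.

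The main obstacle is getting the encoding of the grid coordinates right. The coordinates passed to $t$ must be genuine unary numbers that a function argument can distinguish. We also need the successor relation $(i,i+1)$ to be expressible with only $\mapterm$, $\concatterm$ and $\mathtt{uncons}$. The first idea to try is to represent $i$ as the suffix list $\mathtt{drop}\ i\ w$, computed as the list of tails of $w$. For the successor we take the tail of that suffix: $\mathtt{uncons}$ either yields the next suffix or signals the border, which we then treat as "no constraint".

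One further point needs care. The argument $t$ ranges over all total functions in $\sem{1^*\to1^*\to T}$, including those that inspect lengths arbitrarily. This only helps, since every function on unary pairs is allowed: a valid $n\times n$ tiling gives some $t$ on which $N$ returns $\rightterm\,()$. Conversely, $N$ returning $\rightterm\,()$ on some $t$ and $n$ yields a valid square by reading off $t$ on the indices $0,\ldots,n-1$. Correctness therefore follows, and $\sem M\neq\sem N$ iff some square is tileable.
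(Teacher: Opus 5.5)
There is a genuine gap. The tiling variant you reduce from has the wrong quantifier structure for this reduction, and as a result your construction sends every instance to a non-equivalent pair.

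Equivalence of $N$ with the constant program $M$ says that for \emph{all} $n$ and \emph{all} $t$ the square is invalid. So the property that must be undecidable is its complement: ``there exist $n$ and $t$ describing a valid square''. The problem you start from is instead ``for every $n$ there is a valid $n\times n$ square''. In the sentence ending ``which is undecidable by the above'' you silently replace it with ``no square can be tiled for any $n$'', and these are not complements of each other.

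With only an origin constraint plus local matching constraints, the existential property is trivially true. Take $n=1$ and $t$ constantly equal to the initial tile. There are no adjacent pairs to check, so $N$ returns $\rightterm\,()$. More generally, validity is preserved by restricting to a smaller square, so the set of tileable sizes is downward closed. Hence $\sem M\neq\sem N$ for every instance, and nothing is reduced.

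The missing idea is a tiling variant in which a \emph{finite} square is a nontrivial witness, so that existence of a solution is undecidable (indeed r.e.-complete). The paper prescribes tiles in two opposite corners, at $(0,0)$ and at $(n,n)$. When rows encode Turing machine configurations, the far corner acts as a halting condition. Then $\sem M=\sem N$ iff the instance has no solution, which is undecidable. Your $N$ only needs one extra test of $t$ at the far corner. The rest of your plan (two-tile windows, finite case analysis, conjunction by filtering and testing emptiness with $\mathtt{uncons}$) is then essentially the paper's program.

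A smaller point: your alternative of obtaining positions via $\mathtt{uncons}$ nested inside $\mapterm$ does not work as stated. $\mapterm$ over $w$ applies a single function to entries that are all $()$, so it yields a list of identical values. To get $[0,\ldots,n-1]$ you should use $\splitterm$, or simply cite \texttt{range} from the survey as the paper does.
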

\begin{proof}
The proof is by a reduction from the tiling problem. The essential idea is very simple: a program can input a potential solution to the tiling problem and check if this solution is indeed valid. The potential solution is itself a function, and hence the need for higher-order types. We begin by formally describing the tiling problem. 

\paragraph*{The tiling problem.} We use a  variant of the tiling problem where  a solution that is a finite $n \times n$ square, with some prescribed tiles in two opposing corners. Let us explain this in more detail. An \emph{instance} of the tiling problem consists of: (a) a finite set  of tiles, with each tile having colours from some finite set  assigned to its four sides;  (b) a designated tile for the upper-left corner; and (c) a designated tile for the lower-left corner. Here is an example of such an instance:
\mypic{1}
A \emph{solution} for such an instance is an $n \times n$ square labelled by the tiles, such that the colours on adjacent sides match (both horizontally and vertically), and the top-left and bottom-right corners have the designated tiles. Here is a picture:
\mypic{2}
The tiling problem is to decide if an instance has a solution.  This is a  classical undecidable problem, with undecidability proved by using rows in the tiling to represent configurations of a Turing machine. 

\paragraph*{The reduction from tiling.} We show that for every instance of the tiling problem, one can compute a pair of polyregular programs, in a higher-order type, which are equivalent if and only if the tiling instance has no solution. This will imply undecidability of the tiling problem. 

Fix  an instance of the tiling problem. We  will use these auxiliary types: 
\begin{align*}
\myoverbrace{\Nat \eqdef 1^*}{natural numbers \\ are encoded in unary}\qquad \Bool \eqdef 1 + 1
\qquad T \eqdef \myoverbrace{ 1 + \cdots + 1}{number of tiles}
\end{align*}
To describe the non-existence of a solution, we will write programs of type 
\begin{align*}
 \myunderbrace{\Nat}{size\\  of   side in \\ solution} \times \ \myunderbrace{(\Nat \times \Nat \to T)}{the solution, extended \\ arbitrarily to the \\ infinite quarter-plane} \  \to  \  \Bool.
\end{align*}
We will write two programs of this type. The first program will always return ``false''. The second program will input a size $n : \Nat$ and a function $\Tt : \Nat \times \Nat \to T$, and it  will return ``true'' if and only if the function $\Tt$ describes a solution,  when restricted to the square that consists of rows and columns in $\set{0,\ldots,n}$.  These two programs will be equivalent if and only if the instance has no solution, thus proving undecidability of the equivalence problem. Here is the code of the second program:
\begin{lstlisting}[language=Haskell]
  \(n, T) ->
    ((T (0,0)) = upperLeftCornerTile) &&
    ((T (n,n)) = lowerRightCornerTile) &&
    (all matchHorizontally (horizontalWindows n T)) &&
    (all matchVertically (verticalWindows n T))
\end{lstlisting}
The essential idea is that lines 2 and 3 check the corner tiles, while lines 4 and 5 check that adjacent tiles are matching on their colours. This is implemented by looking at two-tile windows,  either horizontal or vertical, as in the following picture: 
\mypic{3}
Although we hope that the program is self-explanatory, we explain its subroutines in more detail below, with an emphasis on showing that they can be expressed in  the polyregular $\lambda$-calculus, without using recursion of unrestricted Haskell. 
\begin{itemize}
        \item  In lines 1 and 2 of the program we have  constants 
    \begin{align*}
    \mathtt{upperLeftCornerTile}, \mathtt{lowerRightCornerTile} : T
    \end{align*}
    represent the designated corner tiles. The program uses also Boolean conjunction \&\& and equality on tiles. These can be implemented, since  the polyregular $\lambda$-calculus can implement any function $A \to B$ where both $A$ and $B$ have finite types. In the same way, we can implement  the functions 
        \begin{align*}
        \mathtt{matchHorizontally}, \mathtt{matchVertically} : T \times T \to \Bool
        \end{align*}
    which check if two windows are matching horizontally or vertically.   
    \item In lines 4 and 5 we use the function 
    \begin{align*}
    \mathtt{all} : (T \times T \to \Bool) \to (T \times T)^* \to \Bool,
    \end{align*}
    which tells us if all windows in a given list satisfy some predicate. This function can be implemented in the polyregular $\lambda$-calculus, by filtering out all the true elements
    and checking if the remaining list is empty~\cite[p.~89]{bojanczykPolyregularFunctions2018}.
    
    \item The last component we need to explain is the functions 
    \begin{align*}
    \mathtt{horizontalWindows}, \mathtt{verticalWindows} : \Nat \to (\Nat \times \Nat \to T) \to  (T \times T)^*,
    \end{align*}
    which return the two-tile windows in the horizontal or vertical direction, respectively.
    These are implemented in the obvious way using list comprehension:
    \begin{lstlisting}[language=Haskell]
    horizontalWindows n T =
      concat (map (\i -> 
        concat (map (\j -> [(T(i,j),T(i+1,j))]) (range n))
          ) (range n))
    \end{lstlisting}
    In the above code, the \texttt{range} subroutine maps $n$ to $[0,\ldots,n-1]$; such a function can easily be implemented in the polyregular $\lambda$-calculus~\cite[Example 13]{polyregular-survey}. 
\end{itemize}
\end{proof}

\section{Conclusion}
\label{sec:conclusion}
We have shown that equivalence for the polyregular $\lambda$-calculus is undecidable in higher-order types. The order of types in the undecidability is not very high, namely 2: these are functions that input functions. For order 1, i.e.~types of the form 
\begin{align*}
A_1 \to A_2 \to \cdots \to A_n, \qquad \text{where each $A_i$ is arrow-free,}
\end{align*}
the equivalence problem is the same as for string-to-string functions. Indeed, by using Currying, such functions can be seen as having type $A \to B$, where both $A$ and $B$ are arrow-free. Next, elements of an arrow-free type can be represented as strings over a finite alphabet, thus reducing the problem to the string-to-string case, see e.g.~\cite[Lemma 4.4]{bojanczykRegularFirstOrderList2018}.  We do hope that in order 1, i.e.~in the case of string-to-string functions, the equivalence problem is decidable, but this remains an open problem.

\bibliographystyle{plain}
\bibliography{bib}

\end{document}